\theoremstyle{plain}
\newtheorem{prop}[theorem]{Proposition}
\newcommand{\ket}[1]{| #1 \rangle}
\newcommand{\bra}[1]{\langle #1 |}
\newcommand{\R}{\Bbb{R}}
\newcommand{\N}{\Bbb{N}}
\newcommand{\set}[1]{\{#1\}}
\newcommand{\Set}[2]{\set{#1 \,|\, #2}}
\newcommand{\x}{\times}
\newcommand{\eps}{\varepsilon}
\newcommand{\E}{\mathbb{E}}                    % expectation
\newcommand{\epsclose}[1][\eps]{\approx_{#1}}
\newcommand{\prob}[2][]{P#1[\,#2\, #1]}                 % probability P[.]
\newcommand{\Prob}[3][]{P#1[\,#2\,#1|\, #3 \, #1]}      % conditional prob P[.|.]
\newcommand{\mb}[1]{\text{\boldmath$#1$}}     % bold for math-font
\newcommand{\const}{c}
\newcommand{\av}[1]{\overline{#1}}             % average value
\newcommand{\ev}{{\cal E}}                                   % event
\newcommand{\typ}{{\cal T}}                                   % typical event
\newcommand{\ind}[1]{^{#1}}
\newcommand{\indl}{\ind{\ell}}
\newcommand{\etal}{{\it et al.}\xspace}
\newcommand{\assign}{\ensuremath{\kern.5ex\raisebox{.1ex}{\mbox{\rm:}}\kern -.3em =}}
\renewcommand{\H}{{\cal H}} 
\newcommand{\regA}{{\sf A}}
\newcommand{\povm}{E} 
\newcommand{\POVM}{{\bf E}} 
\newcommand{\game}{{\cal G}} 
\newcommand{\q}{q}
\newcommand{\V}{{\sf V}}
\newcommand{\qn}{\q^{(n)}}
\newcommand{\val}{v}
\newcommand{\ns}{\text{\rm ns}}
\newcommand{\qu}{\text{\rm qu}}
\renewcommand{\c}{\text{\rm c}}
\newcommand{\pio}{\pi_\circ}
\newcommand{\A}{{\cal A}} 
\newcommand{\X}{{\cal X}}
\newcommand{\Y}{{\cal Y}}
\newcommand{\allX}{\mb{X}\hspace{-0.2ex}}
\newcommand{\allA}{\mb{A}}
\newcommand{\allx}{\mb{x}}
\newcommand{\alla}{\mb{a}}
\title{On the Parallel Repetition of Multi-Player Games: The No-Signaling Case}
\titlerunning{On the Parallel Repetition of Multi-Player Games} %optional, in case that the title is too long; the running title should fit into the top page column
\author[1,2]{Harry Buhrman\thanks{h.buhrman@cwi.nl}}
\author[1]{Serge Fehr\thanks{s.fehr@cwi.nl}}
\author[2,1]{Christian Schaffner\thanks{c.schaffner@uva.nl}}
\affil[1]{Centrum Wiskunde \& Informatica (CWI), Amsterdam, The Netherlands}
\affil[2]{Institute for Logic, Language and Computation (ILLC), \hspace{5cm} University of Amsterdam, The Netherlands}
\authorrunning{H. Buhrman, S. Fehr, and C. Schaffner} %mandatory. First: Use abbreviated first/middle names. Second (only in severe cases): Use first author plus 'et. al.'
\subjclass{E.4 Coding and Information Theory}% mandatory: Please choose ACM 1998 classifications from http://www.acm.org/about/class/ccs98-html . E.g., cite as "F.1.1 Models of Computation". 
\keywords{Parallel repetition, non-signaling value, multi-player non-local games}% mandatory: Please provide 1-5 keywords
\begin{document}

\maketitle

\begin{abstract}
We consider the natural extension of two-player nonlocal games to an arbitrary number of players. An important question for such nonlocal games is their behavior under parallel repetition. For {\em two-player} nonlocal games, it is known that both the {\em classical} and the {\em non-signaling} value of any game converges to zero exponentially fast under parallel repetition, given that the game is non-trivial to start with (i.e., has classical/non-signaling value $<1$). Very recent results~\cite{DSV13arxiv,CS13arxiv,JPY13arxiv} show similar behavior of the {\em quantum} value of a two-player game under parallel repetition. For nonlocal games with three or more players, very little is known up to present on their behavior under parallel repetition; this is true for the classical, the quantum and the non-signaling value. 

In this work, we show a parallel repetition theorem for the {\em non-signaling} value of a large class of multi-player games, for an arbitrary number of players.
Our result applies to all multi-player games for which all possible combinations of questions have positive probability; this class
in particular includes all {\em free} games, in which the questions to the players are chosen independently. Specifically, we prove that if the original game $\game$ has a non-signaling value $v_\ns(\game) < 1$, then the non-signaling value of the $n$-fold parallel repetition is exponentially small in $n$. Stronger than that, we prove that the probability of winning more than $(v_\ns(\game) + \delta) \cdot n$ parallel repetitions is exponentially small in $n$ (for any~$\delta > 0$). 

Our parallel repetition theorem for multi-player games is weaker than the known parallel repetition results for two-player games in that the rate at which the non-signaling value of the game decreases not only depends on the non-signaling value of the original game (and the  number of possible responses), but on the complete description of the game. 
Nevertheless, we feel that our result is a first step towards a better understanding of the parallel repetition of nonlocal games with more than two players. 
\end{abstract}

%=================================
\section{Introduction}\label{sec:intro}
%=================================

\paragraph*{Background.}
In an $m$-player nonlocal game $\game$, $m$ players receive respective questions $x_1,\ldots,x_m$, chosen according to some joint probability distribution, and the task of the $m$ players is to provide ``good'' answers $a_1,\ldots,a_m$, {\em without communicating} with each other. The players are said to {\em win} the game if the given answers jointly satisfy some specific property with respect to the given questions. The {\em value} of a given game is defined to be the maximal winning probability of the players. One distinguishes between the classical, the quantum, and the non-signaling value, depending on whether the players are restricted to be classical, may share entanglement and do quantum measurements, or are allowed to make use of any hypothetical strategy that does not violate non-signaling.  

An important question for nonlocal games is their behavior under parallel repetition. This question is somewhat understood in the case of {\em two} players, where $m = 2$. Indeed, Raz showed in his celebrated parallel repetition theorem~\cite{Raz98} that if the classical value of a two-player game $\game$ is $v_\c(\game) < 1$ then the classical value $v_\c(\game^n)$ of the $n$-fold parallel repetition of $\game$ satisfies $v_\c(\game^n) \leq \bar{v}_\c(\game)^{n/\log(s)}$, where $s$ denotes the number of possible pairs of answers $a_1$ and $a_2$, and $\bar{v}_\c(\game) < 1$ only depends on $v_\c(\game)$. Raz's result was improved and simplified by Holenstein~\cite{Holenstein09}, who gave an explicit and tighter dependency between $\bar{v}_\c(\game)$ and $v_\c(\game)$, namely $\bar{v}_\c(\game) = 1 - \frac{1}{6000}(1-v_\c(\game))^3$. Holenstein also showed that a similar result holds for the non-signaling value of any two-player game: $v_\ns(\game^n) \leq \bar{v}_\ns(\game)^{n}$ for $\bar{v}_\ns(\game) = 1 - 
\frac{1}{6400}(1-
v_\ns(\game))^2$. 
Parallel repetition results for the quantum value of two-player games were first derived for certain special classes of games, like XOR-games~\cite{CSUU08} or unique games~\cite{KRT10}, or for a non-standard parallel repetition where the different repetitions of the original game are intertwined with modified versions of the original game~\cite{KV11}. Recently, several results about the parallel repetition of more general quantum games have been obtained~\cite{DSV13arxiv,CS13arxiv,JPY13arxiv}.

There are further improvements to the above results on two-player games. For instance, Rao~\cite{Rao11} showed a {\em concentration} result for the classical value of any two-player game, saying that the probability to win more than $(v_\ns(\game) + \delta) \cdot n$ out of the $n$ repetitions is exponentially small (for any $\delta > 0$).%
\footnote{Rao claims the concentration result only for the classical value, but the same techniques also apply to the non-signaling value. } 
Furthermore, he improved the bound on the classical value under parallel repetition for {\em projection} games. 
A similar improvement on the bound on the classical value under parallel repetition was given by Barak \etal~\cite{BRRRS09} for {\em free} games, together with a further improvement, namely a {\em strong} parallel repetition theorem (meaning that meaning that $v_\c(\game^n) \leq v_\c(\game)^{\Omega(n)}$), for {\em free projection} games. 

When considering multi-player nonlocal games with strictly more than $2$ players, to the best of our knowledge, very little is known about their behavior under parallel repetition, except for trivial cases. This applies to the classical, the quantum, and the non-signaling value.  
In~\cite{Rosen10}, Rosen proved a parallel-repetition result for more than 2 players. While her proof strategy is very similar to ours (closely following~\cite{Holenstein09}), a somewhat unnatural definition of multi-player non-signaling correlations is used where no $m-1$ provers together can signal to the remaining prover. In our (standard) model, one also demands that any subset (of arbitrary size) of provers can not signal to the remaining provers.

Another result about multi-player games is by Bri{\"e}t \etal \cite{BBLV13} about the related question of XOR repetition. They show the existence of a 3-player XOR game whose classical value of the XOR repetition is bounded from below by a constant (independent of the number of repetitions). Hence, XOR repetition does not hold for this game (but parallel repetition might still hold). Our result does not imply anything about those games, because the non-signaling value of XOR games is always 1.

% (Anthony suggested that:) It might be possible to generalize http://arxiv.org/abs/1308.0312 to multiple players and use it to obtain our result or better as well.

Possible applications of our result could be of cryptographic nature where the hardness of a basic task is amplified by parallel repetition. A likely scenario for applying our results (and our original motivation to study the problem) is position-based quantum cryptography~\cite{BCFGGOS11,BFSS13}, in the spirit of a recent result on parallel repetition of a particular game~\cite{TFKW13}. However, as our result only applies to a restricted class of games, we were not able yet to apply it to in this cryptographic context.

\paragraph*{Our Results.}
We show a parallel repetition and a concentration theorem for the non-signaling value of $m$-player games for any $m$, for a large class of games. 
The class of games to which our result applies consists of all multi-player games with {\em complete support}, meaning that all possible combinations of questions $x_1,\ldots,x_m$ must have positive probability of being asked. This class of games in particular includes all {\em free} games, in which the questions to the different players are chosen independently.
For any $m$-player game $\game$ with complete support, we show that if $v_\ns(\game) < 1$ then there exists $\bar{v}_\ns(\game) < 1$ so that $v_\ns(\game^n) \leq \bar{v}_\ns(\game)^n$, and the probability of winning more than $(v_\ns(\game) + \delta) \cdot n$ out of the $n$ repetitions with an arbitrary non-signaling strategy is exponentially small (for any $\delta > 0$). 

We point out that our parallel repetition result for multi-player games (with complete support) is of a weaker nature than the parallel repetition results for two-player games discussed above, in that in our result the constant $\bar{v}_\ns(\game)$ depends on the complete description of the game $\game$, and not just on its non-signaling value $v_\ns(\game)$. 
Still, our result is the first that shows a parallel repetition result for a large class of $m$-player games with $m > 2$ for one of the three values (the classical, quantum or non-signaling) of interest.  

For proving our results, we borrow and extend tools from~\cite{Holenstein09} and~\cite{Rao11}, and combine them with some new technique. The new technique involves considering strategies that are {\em almost} non-signaling, meaning that the non-signaling properties only hold up to some small error. We then show (Proposition~\ref{prop:robust}) and use in our proof that the non-signaling value of a game is {\em robust} under extending the quantification over all non-signaling strategies to all almost non-signaling strategies.

%=================================
\section{Preliminaries}
%=================================

%---------------------------------------------------------
\subsection{Basic Notation}
%---------------------------------------------------------

For any $m$-partite set $\X =  \X_1\x\cdots\x\X_m$, any $m$-tuple $x = (x_1,\ldots,x_m) \in \X$, and any index set $I = \set{i_1,\ldots,i_k} \subseteq \set{1,\ldots,m}$, we write $\X_I$ to denote the  $k$-partite set $\X =  \X_{i_1}\x\cdots\x\X_{i_k}$, and we write $x_I$ to denote the $k$-tuple $x = (x_{i_1},\ldots,x_{i_k}) \in \X_I$. 
To denote elements from the $n$-fold Cartesian product of an $m$-partite set $\X$ as above, we write $\allx=(x\ind{1},\ldots,x\ind{n}) \in \X \x \cdots \x \X$ with $x\ind{i} = (x_1\ind{i},\ldots,x_m\ind{i}) \in \X$. 
For $i \in \set{1,\ldots,m}$, we then write $\allx_i$ for $\allx_i = (x_i\ind{1},\ldots,x_i\ind{n})$, and for $I = \set{i_1,\ldots,i_k} \subseteq \set{1,\ldots,m}$, $x_I\indl$ is naturally understood as $x_I\indl = (x_{i_1}\indl,\ldots,x_{i_k}\indl)$ and  $\allx_I$ as $\allx_I = (\allx_{i_1},\ldots,\allx_{i_k})$. Corresponding notation is used for random variables $X$ over $\X$ and $\allX$ over $\X \x \cdots \x \X$.

%---------------------------------------------------------
\subsection{Probabilities and Random Variables}
%---------------------------------------------------------

We consider finite probability spaces, given by a non-empty finite sample space $\Omega$ and a probability function $P:\Omega \to [0,1]$. 
A random variable is a function $X: \Omega \rightarrow \cal X$ from $\Omega$ into some finite set $\X$. The distribution of $X$, denoted as $P_X$, is given by $P_X(x) = \prob{X\!=\!x} = \prob{\Set{\omega \in \Omega}{X(\omega)\!=\!x}}$. 
%An {\em event} $\ev$ is a subset of $\Omega$. For any event $\ev$ with $P[\ev] > 0$, the distribution of a random variable $X$ conditioned on $\ev$ is denoted as $P_{X|\ev}$ and given by $P_{X|\ev}(x) = P[X\!=\!x \wedge \ev] = P[\Set{\omega \in \Omega}{X(\omega)\!=\!x} \cap \ev]$. 
The joint distribution of a pair of random variables $X$ and $Y$ is denoted by $P_{XY}$, i.e., $P_{XY}(x,y) = \prob{X\!=\!x \wedge Y\!=\!y}$, and the conditional distribution of $X$ given $Y$ is denoted by $P_{X|Y}$ and defined as $P_{X|Y}(x|y) = P_{XY}(x,y)/P_Y(y)$ for all $x$ and $y$ with $P_Y(y) > 0$. 
An event $\ev$ is a subset of $\Omega$, and the conditional distribution of a random variable $X$ given $\ev$ is denoted as $P_{X|\ev}$ and given by $P_{X|\ev}(x) = \prob{X \!=\!x \wedge \ev}/\prob{\ev}$.

The variational (or statistical) distance between two probability
distributions $P_X$ and $Q_X$ for the same random variable $X: \Omega \rightarrow \cal X$ over two probability spaces $(\Omega,P)$ and $(\Omega,Q)$ (with the same~$\Omega$),
%over the same finite alphabet $\X$ 
is defined as
\begin{align*}
  \| P_X - Q_X \| \assign \frac12 \sum_{x \in \X} |P_X(x) - Q_X(x)| 
\end{align*}
If $P_X$ and $Q_X$ are $\eps$-close in variational distance, we also
write $P_X \epsclose Q_X$.

Usually, we leave the probability space $(\Omega,P)$ etc.\ implicit, and understand random variables $X,Y,\ldots$ to be defined by their joint distribution $P_{XY\cdots}$, or by some ``experiment'' that uniquely determines their joint distribution.

%---------------------------------------------------------
\subsection{Some Useful Facts}
%---------------------------------------------------------

The following lemma states that the variational distance cannot increase when less information is taken into account. 
\begin{lemma} \label{lemma:marginals}
Let $P_{XY}$ and $Q_{XY}$ be joint distributions for random variables $X$ and $Y$ with respective ranges $\X$ and $\Y$, and let $P_X$ and $Q_X$ be the corresponding marginals. Then,
$$\| P_X - Q_X \| \leq \|P_{XY} - Q_{XY}\| \, .$$
\end{lemma}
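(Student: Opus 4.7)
The plan is to prove this by a direct, one-line calculation that applies the triangle inequality to the marginalization sum. The key observation is that $P_X(x)$ can be written as $\sum_{y\in\Y} P_{XY}(x,y)$ (and similarly for $Q_X$), so the difference $P_X(x) - Q_X(x)$ is itself a sum over $y$ of the pointwise differences $P_{XY}(x,y) - Q_{XY}(x,y)$.

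Concretely, starting from the definition
\begin{align*}
\|P_X - Q_X\| = \frac{1}{2}\sum_{x\in\X}\bigl|P_X(x) - Q_X(x)\bigr|,
\end{align*}
I would substitute $P_X(x) = \sum_y P_{XY}(x,y)$ and $Q_X(x) = \sum_y Q_{XY}(x,y)$, pull the minus sign inside to write the summand as $\bigl|\sum_y (P_{XY}(x,y) - Q_{XY}(x,y))\bigr|$, and then apply the triangle inequality $|\sum_y t_y| \le \sum_y |t_y|$ to each fixed $x$. Summing over $x$ then yields $\frac{1}{2}\sum_{x,y}|P_{XY}(x,y) - Q_{XY}(x,y)| = \|P_{XY} - Q_{XY}\|$, which is exactly the desired bound.

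There is no real obstacle here: the statement is the standard data-processing (monotonicity under marginalization) property of variational distance, and the only tool used is the triangle inequality on the inner sum over $y$. One minor thing to be mindful of is to state the substitution $P_X(x) = \sum_y P_{XY}(x,y)$ explicitly, since this is what makes the triangle inequality applicable; everything else is a mechanical rearrangement of sums.
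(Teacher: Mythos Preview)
Your proposal is correct and matches the paper's proof essentially line for line: the paper also expands $P_X(x)=\sum_y P_{XY}(x,y)$, applies the triangle inequality to the inner sum over $y$, and reads off $\|P_{XY}-Q_{XY}\|$. There is nothing to add or change.
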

\begin{proof}
\begin{align*}
\| P_X - Q_X \| &= \frac12 \sum_{x \in \X} | P_X(x) - Q_X(x) | = \frac12 \sum_{x \in \X} \left| \sum_{y \in \Y} \big( P_{XY}(x,y) - Q_{XY}(x,y) \big)
\right| \\
&\leq \frac12 \sum_{x \in \X} \sum_{y \in \Y} \left| P_{XY}(x,y) - Q_{XY}(x,y)
\right| = \| P_{XY} - Q_{XY} \| \, .
\end{align*}
\end{proof}
The next lemma is due to Holenstein~\cite{Holenstein09} (a simplified version of his Corollary~6). 
\begin{lemma}\label{lemma:bound}
Let $T$ and $U\ind{1},\ldots,U\ind{L}$ be random variables with distribution $P_{TU\ind{1}\cdots U\ind{L}} = P_T \cdot P_{U\ind{1}|T}\cdots P_{U\ind{L}|T}$ (i.e. the $U\indl$'s are conditionally independent given $T$), and let $\ev$ be an event. Then
$$
\sum_{\ell=1}^L \,\bigl\|P_{TU\indl|\ev} - P_{T|\ev} \cdot P_{U\indl|T} \bigr\| \leq \sqrt{L \log\Bigl(\frac{1}{\prob{\ev}}\Bigr)} \, .
$$
\end{lemma}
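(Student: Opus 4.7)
The plan is to reduce the sum of $L$ variational distances to a single Kullback--Leibler divergence, combining Pinsker's inequality and Cauchy--Schwarz on the outside with the chain rule for relative entropy and the conditional-independence hypothesis on the inside. Writing $D(\cdot\|\cdot)$ for relative entropy and $D(\cdot\|\cdot\mid P_T)$ for its conditional version, the starting observation is that $P_{TU\indl|\ev}$ and $P_{T|\ev}\cdot P_{U\indl|T}$ share the same $T$-marginal $P_{T|\ev}$, so the $\ell$-th term equals $\E_{t\sim P_{T|\ev}}\|P_{U\indl|T=t,\ev}-P_{U\indl|T=t}\|$. Pinsker's inequality applied pointwise in $t$, together with Jensen's inequality applied to the square, then gives
\[
\bigl\|P_{TU\indl|\ev}-P_{T|\ev}\cdot P_{U\indl|T}\bigr\|^{2} \;\leq\; \tfrac12\,D\bigl(P_{U\indl|T\ev}\,\big\|\,P_{U\indl|T}\,\big|\,P_{T|\ev}\bigr).
\]
After Cauchy--Schwarz on the sum over $\ell$, the task reduces to bounding $\sum_\ell D(P_{U\indl|T\ev}\|P_{U\indl|T}\mid P_{T|\ev})$ by $\log(1/\prob{\ev})$.

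To bound this sum I would use the standard fact $\sum_\ell D(P_{U\indl}\|Q_\ell)\leq D(P\|\prod_\ell Q_\ell)$ for any joint $P$ on $(U\ind{1},\ldots,U\ind{L})$ and any product $\prod_\ell Q_\ell$. Applied conditionally with $P=P_{U\ind{1}\cdots U\ind{L}|T=t,\ev}$ and $Q_\ell=P_{U\indl|T=t}$, and averaged over $t\sim P_{T|\ev}$, this collapses the sum into a single conditional KL divergence against the product measure $\prod_\ell P_{U\indl|T}$. The hypothesis enters crucially here: under $P$ we have $\prod_\ell P_{U\indl|T}=P_{U\ind{1}\cdots U\ind{L}|T}$, so by the KL chain rule the single conditional KL equals $D(P_{TU\ind{1}\cdots U\ind{L}|\ev}\|P_{TU\ind{1}\cdots U\ind{L}})-D(P_{T|\ev}\|P_T)$ and is therefore at most $D(P_{V|\ev}\|P_V)$ for $V=(T,U\ind{1},\ldots,U\ind{L})$. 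The standard estimate $D(P_{V|\ev}\|P_V)\leq\log(1/\prob{\ev})$, which follows pointwise from $P_{V|\ev}(v)/P_V(v)=\Prob{\ev}{V\!=\!v}/\prob{\ev}\leq 1/\prob{\ev}$, closes the argument.

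The only real obstacle I expect is bookkeeping: conditional independence of the $U\indl$'s given $T$ is a hypothesis about $P$, not about $P_{\cdot|\ev}$, so one must take care to apply the chain rule in the direction that factors the \emph{unconditioned} joint distribution. Once that is arranged, the combination of Pinsker, Cauchy--Schwarz, and the chain-rule telescoping yields $\sum_\ell \|\,\cdot\,\| \leq \sqrt{L\cdot\tfrac12\log(1/\prob{\ev})}$, which is in fact slightly sharper than the stated bound.
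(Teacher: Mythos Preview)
Your argument is correct. The paper does not give its own proof of this lemma: it is simply quoted as ``a simplified version of [Holenstein's] Corollary~6'' with a citation to~\cite{Holenstein09}. Your route---Pinsker pointwise in $t$, Jensen, Cauchy--Schwarz over $\ell$, then the chain rule for relative entropy together with the superadditivity inequality $\sum_\ell D(P_\ell\|Q_\ell)\le D(P\|\prod_\ell Q_\ell)$ and the pointwise bound $D(P_{V|\ev}\|P_V)\le\log(1/\prob{\ev})$---is exactly the standard proof of this type of statement and is essentially how Holenstein proceeds as well. Your remark that the argument actually delivers the sharper bound $\sqrt{\tfrac12 L\log(1/\prob{\ev})}$ (for the natural logarithm) is also correct; the paper simply quotes the cruder constant, which suffices for its purposes.
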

The following is Hoeffding Inequality's for sampling without replacement~\cite{hoeffding63}.
\begin{theorem}[Hoeffding Inequality for sampling without replacement]
\label{thm:hoeffding}
Let $w \in \set{0,1}^n$ be an $n$-bit string with $\frac{1}{n}\sum_{\ell=1}^n w_i = \av{w}$. Let the random variables $D_1, D_2,\ldots, D_K$ be obtained by sampling $K$ random entries from $w$ {\em without replacement}. Then, for any $\eps > 0$, the random variable $\av{D}:=\frac{1}{K}\sum_k D_k$ satisfies
$$
\prob[\big]{\av{D} \leq \av{w} - \eps } \leq \exp\bigl(-2\eps^2 K\bigr) \, .
$$
\end{theorem}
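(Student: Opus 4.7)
The plan is to reduce sampling without replacement to the i.i.d.\ (with-replacement) setting, where the standard Chernoff--Hoeffding tail bound applies. Let $Z_1,\ldots,Z_K$ denote independent uniform draws \emph{with replacement} from the entries of $w$, and write $\av Z = \frac1K \sum_k Z_k$; both $\av D$ and $\av Z$ have expectation $\av w$.

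First I would establish the bound in the with-replacement case by the standard exponential (Chernoff) method. For any $t\ge 0$, Markov's inequality applied to $\exp(-t \sum_k (Z_k-\av w))$ gives
\[
\prob{\av Z \le \av w - \eps} \le e^{-tK\eps}\prod_{k=1}^K \E\bigl[e^{-t(Z_k-\av w)}\bigr] \le e^{-tK\eps + t^2 K/8},
\]
where the last step uses independence together with Hoeffding's lemma $\E[e^{-t(Z-\av w)}] \le e^{t^2/8}$, valid since $Z_k-\av w \in [-\av w, 1-\av w]$, a subinterval of length $1$. Optimizing in $t$ (choose $t = 4\eps$) yields the desired $\exp(-2\eps^2 K)$.

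Next I would invoke Hoeffding's convexity reduction: for every convex function $\phi\colon \R\to\R$,
\[
\E\bigl[\phi(D_1+\cdots+D_K)\bigr] \le \E\bigl[\phi(Z_1+\cdots+Z_K)\bigr].
\]
Specialized to $\phi(x) = e^{-t(x-K\av w)}$ (convex for every real $t$), this shows the moment generating function of $\av D-\av w$ is dominated by that of $\av Z-\av w$, so the Chernoff calculation above applies verbatim to $\av D$ and delivers the claimed bound.

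The main obstacle is the convexity reduction itself. The argument (due to Hoeffding) realizes the without-replacement sum as a conditional expectation of the with-replacement sum given an appropriate $\sigma$-algebra: one conditions on the (random) multiset of distinct indices that appears in $Z_1,\ldots,Z_K$, expresses the multiplicities as an average over orderings, and applies Jensen to pass to the inequality $\E[\phi(\sum_k D_k)]\le \E[\phi(\sum_k Z_k)]$. Once this domination is in hand, the rest is routine bookkeeping and the optimization above.
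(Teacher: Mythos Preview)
The paper does not supply a proof of this theorem; it is quoted as a classical fact with a reference to Hoeffding's 1963 paper. Your outline---the Chernoff exponential-moment bound for the i.i.d.\ (with-replacement) sample together with Hoeffding's convexity reduction $\E[\phi(\sum_k D_k)]\le \E[\phi(\sum_k Z_k)]$ for convex $\phi$---is precisely the argument in that reference and yields the stated bound after optimizing $t=4\eps$.

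One small remark: your one-line justification of the convexity reduction (``condition on the multiset of distinct indices \ldots\ and apply Jensen'') is not quite how Hoeffding's proof runs; the actual argument exploits exchangeability of the multinomial occupancy vector and a symmetrization over permutations rather than a direct conditional-expectation identity of the form $\sum_k D_k = \E[\sum_k Z_k \mid \cdot]$. Since you are invoking the lemma as a black box from \cite{hoeffding63} rather than reproving it, this does not affect the correctness of your proposal, but the sketch as written would not go through on its own.
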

% \serge{This is actually slightly stronger than the Chernoff Bound used by Rao, and it takes care of sampling {\em without} replacement. }
%
Finally, we will make use of the Azuma-Hoeffding Inequality, stated below. We first define the notion of a supermartingale. 

\begin{definition}[Supermartingale]
A sequence of real valued random variables $M_0,M_1,\ldots,M_K$ is called a {\em supermartingale} if $\E[M_k|M_0\cdots M_{k-1}] \leq M_{k-1}$ (with probability $1$) for every $k \geq 1$.  
\end{definition}

\begin{theorem}[Azuma-Hoeffding Inequality]
If $M_0,M_1,\ldots,M_K$ is a supermartingale with $M_k \leq M_{k-1} + 1$, then
$$
\prob[\big]{M_K > M_0 + \eps K} \leq \exp\bigl(-\eps^2 K/2\bigr) \, .
$$
\end{theorem}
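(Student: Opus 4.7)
The plan is a standard Chernoff-style argument based on exponential moment generating functions. Let $D_k := M_k - M_{k-1}$ denote the increments. The supermartingale property gives $\E[D_k \mid M_0,\ldots,M_{k-1}] \le 0$, and the hypothesis gives $D_k \le 1$ almost surely (for the claimed constant to come out cleanly, the natural reading is the two-sided bound $|D_k|\le 1$, which is the standard Azuma--Hoeffding hypothesis).

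First I would reduce the tail bound to an exponential-moment bound via Markov's inequality: for any $\lambda\ge 0$,
\[
\prob{M_K > M_0 + \eps K} \;=\; \prob{e^{\lambda(M_K - M_0)} > e^{\lambda \eps K}} \;\le\; e^{-\lambda\eps K}\,\E\!\bigl[e^{\lambda(M_K - M_0)}\bigr].
\]
Writing $M_K - M_0 = \sum_{k=1}^K D_k$ and peeling off one factor at a time using the tower property,
\[
\E\!\bigl[e^{\lambda(M_K - M_0)}\bigr] \;=\; \E\!\bigl[\,e^{\lambda\sum_{k<K} D_k}\cdot \E[e^{\lambda D_K}\mid M_0,\ldots,M_{K-1}]\,\bigr],
\]
so the problem reduces to bounding, uniformly over the past, the conditional moment generating function $\E[e^{\lambda D_k}\mid M_0,\ldots,M_{k-1}]$.

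The heart of the argument is the Hoeffding-type estimate $\E[e^{\lambda D_k}\mid M_0,\ldots,M_{k-1}] \le e^{\lambda^2/2}$. Decomposing $D_k = \tilde D_k + \mu_k$ with $\mu_k := \E[D_k\mid\text{past}]\le 0$, the centred variable $\tilde D_k$ has conditional mean zero and lives in an interval of length at most $2$; convexity of $x\mapsto e^{\lambda x}$ bounded by its linear interpolation between the two endpoints of that interval (the classical proof of Hoeffding's lemma) gives $\E[e^{\lambda \tilde D_k}\mid \cdot]\le e^{\lambda^2/2}$, and the additional factor $e^{\lambda\mu_k}\le 1$ for $\lambda\ge 0$ only helps. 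Iterating this estimate across all $K$ increments yields $\E[e^{\lambda(M_K - M_0)}]\le e^{K\lambda^2/2}$.

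Combining the previous displays,
\[
\prob{M_K > M_0 + \eps K} \;\le\; \exp\!\bigl(-\lambda\eps K + K\lambda^2/2\bigr),
\]
and optimizing over $\lambda\ge 0$ (the minimum is attained at $\lambda = \eps$) gives the stated bound $\exp(-\eps^2 K/2)$. The main obstacle is the conditional exponential-moment estimate: it is the only step where the boundedness of the increments enters, and it is what pins down the constant $1/2$ in the exponent, while everything else is bookkeeping with Markov's inequality and the tower property.
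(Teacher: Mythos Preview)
The paper does not actually prove this theorem: it is stated in the ``Some Useful Facts'' subsection as a standard preliminary, alongside Hoeffding's sampling inequality (which is cited to \cite{hoeffding63}), and is then invoked as a black box in the proof of Theorem~\ref{thm:CT}. So there is no paper proof to compare against; what you have written is exactly the standard textbook derivation (Markov on the exponential moment, Hoeffding's lemma for each conditional increment via convexity, tower property to peel off the increments, optimize over $\lambda$), and it is correct.

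Your caveat about the hypothesis is well taken and worth making explicit. As literally stated, the theorem assumes only the one-sided bound $M_k \le M_{k-1}+1$, and with no lower bound on the increments the inequality is in fact false: take $K$ i.i.d.\ increments equal to $1$ with probability $1-1/K$ and to $-(K-1)$ with probability $1/K$; this is a martingale with $D_k\le 1$, yet $\prob{M_K > K/2}\ge (1-1/K)^K \to e^{-1}$, contradicting the claimed bound $e^{-K/8}$. The intended hypothesis is the two-sided $|M_k - M_{k-1}|\le 1$, which is what your Hoeffding-lemma step uses and which does hold in the paper's application (there the increments lie in $\{\gamma, -(1-\gamma)\}\subset[-1,1]$). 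So your reading of the hypothesis is the right one, and your argument goes through cleanly under it.
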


%---------------------------------------------------------
\subsection{Nonlocal Games}
%---------------------------------------------------------

\begin{definition}
An {\em $m$-player nonlocal game}, or simply {\em ($m$-player) game} $\game$ consists of two $m$-partite sets $\X =  \X_1\x\cdots\x\X_m$ and $\A =  \A_1\x\cdots\x\A_m$,  a probability distribution $\pi: \X \to [0,1]$ on $\X$, i.e., $\sum_x \pi(x) = 1$, and a verification predicate $\V: \X \x \A \to \set{0,1}$. 
\end{definition}

\begin{definition}
A {\em strategy} for an $m$-player game $\game = (\X,\A,\pi,\V)$  is a conditional probability distribution $\q(\cdot|\cdot):\A \x \X \to [0,1]$, i.e., $\sum_a \q(a|x) = 1$ for all $x \in \X$. 
\end{definition}

\begin{definition}\label{def:game}
For any  $m$-player game $\game = (\X,\A,\pi,\V)$ and any strategy $\q$ for $\game$, the {\em value} of the game with respect to $\q$ is given by 
$$
\val[\q](\game):= \sum_{x \in \X \atop a\in \A} \pi(x) \, \q(a|x) \, \V(x,a) \, .
$$
\end{definition}

Any $m$-player game $\game = (\X,\A,\pi,\V)$ and any strategy $\q$ for $\game$ together naturally define a probability space with random variables $X = (X_1,\ldots,X_m)$ and $A = (A_1,\ldots,A_m)$ with joint probability distribution $P_{XA}$ given by $P_{XA}(x,a) = \pi(x) \q(a|x)$.  The random variable $X$ describes the choice of the input $x \in \X$ according to $\pi$, and the random variable $A$ then describes the reply $a \in \A$ chosen according to the distribution $\q(\cdot|x)$. It obviously holds that $P_X = \pi$, and $P_{A|X}(\cdot|x) = \q(\cdot|x)$ for any $x \in \X$ with $P_X(x) > 0$. A subtlety is that for $x \in \X$ with $P_X(x) = 0$, the distribution $P_{A|X}(\cdot|x)$ is strictly speaking not defined whereas $\q(\cdot|x)$ is. 
The value of the game with respect to strategy $\q$ can be written in
terms of these random variables as $\val[\q](\game) = \prob{\V(X,A)
  \!=\! 1}$. In the following we define the classical, quantum and
non-signaling values of $m$-player games. Only the last one will be used in
the rest of the paper, but we provide all of them for the sake of completeness.

\begin{definition}\label{def:classical}
A {\em strategy} $\q$ for an $m$-player game $\game = (\X,\A,\pi,\V)$  is {\em classical} (or {\em local}) if there exists a probability distribution $p$ on a set $\cal W$ and conditional probability distributions $\q_1,\ldots,\q_m$ such that 
$$
\q(a_1,\ldots,a_m|x_1,\ldots,x_m) = \sum_{w \in \cal W} p(w) \prod_{i=1}^m \q_i(a_i|x_i, w) \, .
$$
The {\em classical value} of a game $\game$ is defined as $\val_\c(\game) \assign \sup_\q \val[\q](\game)$, where the supremum is over all classical strategies $\q$ for $\game$. 
\end{definition}

\begin{definition}\label{def:quantum}
A {\em strategy} $\q$ for an $m$-player game $\game = (\X,\A,\pi,\V)$
is {\em quantum} if there exists an $m$-partite quantum state
$\ket{\psi} \in \H_{\regA_1} \otimes \cdots \otimes \H_{\regA_m}$ and
for every $x = (x_1,\ldots,x_m) \in \X$ there exist POVMs $\POVM^1_{x_1} = \set{\povm^1_{x_1,a_1}}_{a_1 \in \A_1}, \ldots, \POVM^m_{x_m} = \set{\povm^m_{x_m,a_m}}_{a_m \in \A_m}$ such that for all $a = (a_1,\ldots,a_m) \in \A$ and $x = (x_1,\ldots,x_m) \in \X$:
$$
\q(a|x) = \bra{\psi} E^1_{x_1,a_1} \otimes\cdots \otimes E^m_{x_m,a_m} \ket{\psi}
$$
The {\em quantum value} of a game $\game$ is defined as $\val_\qu(\game) \assign \sup_\q \val[\q](\game)$, where the supremum is over all quantum strategies $\q$ for $\game$. 
\end{definition}

\begin{definition}\label{def:NS}
A {\em strategy} $\q$ for an $m$-player game $\game = (\X,\A,\pi,\V)$
is {\em non-signaling} if for any index subset $I \subset
\set{1,\ldots,m}$ and its complement $J = \set{1,\ldots,m} \setminus
I$, it holds that
$$
\sum_{a_J \in \A_J} \!\! \q(a_I,a_J|x_I,x_J) = \!\! \sum_{a_J \in \A_J} \!\! \q(a_I,a_J|x_I,x'_J) 
\quad\text{for all $a_I \in \A_I$, $x_I \in \X_I$ and $x_J,x'_J \in \X_J$} \, .
$$
The {\em non-signaling value} of a game $\game$ is defined as $\val_\ns(\game) \assign \sup_\q \val[\q](\game)$, where the supremum is over all non-signaling strategies $\q$ for $\game$. 
\end{definition}

% We would like to emphasize that in Definition~\ref{def:NS} as well as throughout the paper, an expression like $\q(a_I,a_J|x_I,x_J)$, where $a_I =  (a_{i_1},\ldots,a_{i_k}) \in \A_I$ etc., should be understood as $\q(a_1,\ldots,a_m|x_1,\ldots,x_m)$ rather than as $\q(a_{i_1},...,a_{i_k},a_{j_1},...,a_{j_{m-k}}|x_{i_1},...,x_{i_k},x_{j_1},...,x_{j_{m-k}})$. 

% \begin{definition}\label{def:NS_value}
% The {\em non-signaling value} of a game $\game$ is defined as $\val_\ns(\game) \assign \sup_\q \val[\q](\game)$, where the supremum is over all non-signaling strategies $\q$ for $\game$. 
% \end{definition}

The following relaxed notion of non-signaling is crucial for the understanding of our parallel-repetition proof. 

\begin{definition}\label{def:almostNS}
A {\em strategy} $\q$ for an $m$-player game $\game = (\X,\A,\pi,\V)$
is {\em $\eps$-almost non-signaling} if for any index subset $I
\subset \set{1,\ldots,m}$ and its complement $J = \set{1,\ldots,m}
\setminus I$, it holds that
$$
\Biggl|\sum_{a_J \in \A_J} \!\! \q(a_I,a_J|x_I,x_J) - \!\! \sum_{a_J \in \A_J} \!\! \q(a_I,a_J|x_I,x'_J) \Biggr| \leq \eps
\quad\text{for all $a_I \in \A_I$, $x_I \in \X_I$ and $x_J,x'_J \in \X_J$} \, .
$$
\end{definition}

%=================================
\section{A Multi-Player Parallel Repetition Theorem}
%=================================

%=================================
\subsection{The Parallel Repetition of Nonlocal Games}
%=================================

Given a game $\game$, the $n$-fold parallel repetition $\game^n$ is the game
where the referees samples $n$ independent inputs
$\allx=(x\ind{1},\ldots,x\ind{n}) \in \X \x \cdots \x \X$ and $\game^n$ is won if and only if all
its sub-games are won.
%
% \begin{definition}[Parallel Repetition]
% For any $n \in \N$, the \emph{$n$-fold parallel repetition} of a game  $\game =
% (\X,\A,\pi,\V)$ is given by the game $\game^n = (\X^n,\A^n,\pi^n,\V^{\wedge n})$
% where $\X^n = \X \x \cdots \x \X$ and $\A^n = \A \x \cdots \x \A$, and for all $\allx=(x\ind{1},\ldots,x\ind{n}) \in \X^n$ and $\alla=(a\ind{1},\ldots,a\ind{n}) \in \A^n$
% \begin{align*}
% \pi^n(\allx) \assign \prod_{\ell=1}^n Q(x\indl) 
% \qquad\text{and}\qquad
% \V^{\wedge n}(\allx,\alla) \assign \bigwedge_{\ell=1}^n \V(x\indl,a\indl) \, .
% \end{align*}
% \end{definition}
%
For the sake of notational convenience, we also introduce the
following way of denoting the fact that $t$ of the $n$ parallel
repetitions are won.
\begin{definition}[$t$-out-of-$n$ Parallel Repetition]
For any $n \in \N$ and $t \in \R$, the \emph{$t$-out-of-$n$ parallel repetition} of a game  $\game =
(\X,\A,\pi,\V)$ is given by the game $\game^{t/n} = (\X^n,\A^n,\pi^n,\V^{t/n})$
where $\X^n = \X \x \cdots \x \X$ and $\A^n = \A \x \cdots \x \A$, and for all $\allx=(x\ind{1},\ldots,x\ind{n}) \in \X^n$ and $\alla=(a\ind{1},\ldots,a\ind{n}) \in \A^n$
\begin{align*}
\pi^n(\allx) \assign \prod_{\ell=1}^n \pi(x\indl) 
\qquad\text{and}\qquad
\V^{t/n}(\allx,\alla) \assign 
\left\{\begin{array}{ll}
1 & \text{if } \sum_{\ell=1}^n \V(x\indl,a\indl) \geq t \\
0 & \text{else }
\end{array}\right. \, .
\end{align*}
The (standard) \emph{$n$-fold parallel repetition} of a game $\game$ is given by the game $\game^n := \game^{n/n}$. 
\end{definition}

Similar to the observation after Definition~\ref{def:game}, for any game $\game$ and for any strategy\footnote{We write $\qn$ (rather than e.g.~$\q^n$) to emphasize that it is a strategy for an $n$-fold repetition of $\game$, but it is {\em not} (necessarily) the $n$-fold independent execution of a strategy $\q$ for $\game$. } 
$\qn$ for the $t$-out-of-$n$ (or the $n$-fold) parallel repetition, random variables $\allX = (X\ind{1},\ldots,X\ind{n})$ and $\allA = (A\ind{1},\ldots,A\ind{n})$, together with their joint distribution $P_{\allX \allA}$, are naturally determined. 

Note that for any $\ell \in \set{1,\ldots,n}$, $X\indl$ is of the form $X\indl = (X_1\indl,\ldots,X_m\indl)$, where $X_i\indl$ represents the question to the $i$-th player in the $\ell$-th repetition of $\game$ (and is distributed over $\X_i$). Therefore, for any $i \in \set{1,\ldots,m}$, we write $\allX_i$ for $\allX_i = (X_i\ind{1},\ldots,X_i\ind{n})$, and for any $I = \set{i_1,\ldots,i_k} \subseteq \set{1,\ldots,m}$, $X_I\indl$ should be understood as $X_I\indl = (X_{i_1}\indl,\ldots,X_{i_k}\indl)$ and  $\allX_I$ as  $\allX_I = (\allX_{i_1},\ldots,\allX_{i_k})$. 
The corresponding holds for $\allA$.

To simplify notation, for the $n$-fold repetition of a given game $\game$ with a given strategy $\qn$,  we define $W_\ell$ to be the random variable $W_\ell \assign \V(X\indl,A\indl)$ that indicates if the $\ell$-th repetition of $\game$ is won, and we define $\av{W} \assign \frac{1}{n}\sum_{\ell=1}^n W_\ell$ to be the fraction of repetitions that are won. Obviously, $\val[\qn](\game^{t/n}) = \prob{\av{W} \geq t/n}$.

%=================================
\subsection{Concentration and Parallel Repetition Theorems}
%=================================

Our concentration and parallel repetition theorems below hold for all multi-player nonlocal games $\game$ up to the following restriction on the distribution $\pi$. 
\begin{definition}
We say that an $m$-player game $\game = (\X,\A,\pi,\V)$ has {\em complete support} if $\pi(x) > 0$ for all $x \in \X$, i.e., every $x \in \X = \X_1\x\cdots\x\X_m$ is a ``valid input'' to the game.  
\end{definition}
An important class of games that satisfy the complete-support property are the so-called {\em free} games, as studied for instance in~~\cite{BRRRS09}. In a free game, $\pi$ is required to be a {\em product distribution}, i.e., $\pi(x) = \pi_1(x_1) \cdots \pi_m(x_m)$ for all $x = (x_1,\ldots,x_m) \in \X = \X_1 \times \ldots \times \X_m$. Such a game has obviously full support.\footnote{After possibly having restricted the sets $\X_1,\ldots,\X_m$ appropriately. }

\begin{theorem}[Concentration Theorem]\label{thm:CT}
Let $\game$ be an arbitrary $m$-player game with complete
support. 
%Then there exists a constant $\mu > 0$ \chris{(depending on
%  $\game$)} such that for any $\delta > 0$ and any $n \in \N$, in the $n$-fold parallel repetition of $\game$ with a non-signaling strategy $\qn$: 
%$$
%\prob[\big]{\av{W} > \val_\ns(\game) +\delta } \leq 8 \exp\bigl(-\delta^4 \mu n \bigr) \; .
%$$
%\serge{With the new notation, we can now simply say that:} \\
Then there exists a constant $\mu > 0$, depending on $\game$, such that for any $\delta > 0$, any $n \in \N$, and for $t = (\val_\ns(\game) \!+\! \delta)n$: 
$$
\val_\ns(\game^{t/n}) \leq 8 \exp\bigl(-\delta^4 \mu n \bigr) \; .
$$
\end{theorem}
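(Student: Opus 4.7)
My plan is to adapt Rao's martingale framework for concentration in parallel repetition to the multi-player non-signaling setting via a Holenstein-style conditioning argument (Lemma~\ref{lemma:bound}), with the crucial twist that the single-game strategy extracted at each step is only \emph{almost} non-signaling in the sense of Definition~\ref{def:almostNS}; the loop will be closed by Proposition~\ref{prop:robust}, which guarantees that the non-signaling value is robust under small signaling errors. I would assume for contradiction that some non-signaling $\qn$ for $\game^{t/n}$, with $t=(\val_\ns(\game)+\delta)n$, achieves $p:=\val[\qn](\game^{t/n})=\prob{\ev}$, where $\ev:=\set{\av{W}\geq t/n}$, and aim for $p\leq 8\exp(-\delta^4\mu n)$ with $\mu=\mu(\game)$. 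First, I would fix $\eps=\eps(\game,\delta)>0$, polynomial in $\delta$ with constants depending on $\pi_{\min}:=\min_x\pi(x)>0$ (this positivity is precisely what complete support delivers), small enough that Proposition~\ref{prop:robust} guarantees every $\eps$-almost non-signaling strategy for $\game$ has value at most $\val_\ns(\game)+\delta/4$.

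Next, I would sample $K:=\lfloor n/4\rfloor$ indices $I_1,\ldots,I_K$ uniformly without replacement from $\set{1,\ldots,n}$, and denote by $H_k:=(I_1,X\ind{I_1},A\ind{I_1},\ldots,I_k,X\ind{I_k},A\ind{I_k})$ the history after $k$ rounds. Define
\[
M_k \;:=\; \sum_{j=1}^{k}\bigl(W_{I_j}-(\val_\ns(\game)+\tfrac{\delta}{2})\bigr).
\]
The core claim to prove is that, on $\ev$ together with a typical-history event $\typ$ of probability at least $1-\exp(-\Omega(\delta^2 n))$, the sequence $(M_k)$ is a supermartingale with increments in $[-1,1]$. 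To establish this, the idea is to fix a typical $h_k$ and a fresh $\ell\notin\set{I_1,\ldots,I_k}$ and consider the single-game strategy $\q_{h_k,\ell}(a|x):=\Prob{A\indl=a}{X\indl=x,\,H_k=h_k,\,\ev}$. Applying Lemma~\ref{lemma:bound} with $T=H_k$ and $U\indl=X\indl$ ranging over the fresh coordinates would show that $\Prob{X\indl=x}{H_k,\ev}$ is close on average in $\ell$ to $\pi(x)$; paying a factor $1/\pi_{\min}$ would then yield $\val[\q_{h_k,\ell}](\game)\geq\Prob{W_\ell=1}{H_k,\ev}-O(\pi_{\min}^{-1}\sqrt{\log(1/p)/n})$. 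A second application of Lemma~\ref{lemma:bound}, for each bipartition $(I,J)$ of the players with $T=(H_k,X_I\indl)$ and the $U\indl$'s the opposite-side fresh inputs, combined with Lemma~\ref{lemma:marginals} and a Markov step over $(\ell,I)$, would show that $\q_{h_k,\ell}$ is $\eps$-almost non-signaling for most $\ell$. Proposition~\ref{prop:robust} would then deliver $\E[W_{I_{k+1}}\mid H_k,\ev]\leq\val_\ns(\game)+\delta/2$, which is the supermartingale property.

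Given the supermartingale, Azuma-Hoeffding would give $\Prob{M_K>\delta K/4}{\ev\cap\typ}\leq\exp(-\Omega(\delta^2 K))$. On the other hand, since conditional on $\ev$ we have $\av{W}\geq\val_\ns(\game)+\delta$, Theorem~\ref{thm:hoeffding} applied to the uniform-without-replacement sampling of $I_1,\ldots,I_K$ would yield $\Prob{\tfrac{1}{K}\sum_j W_{I_j}\geq\val_\ns(\game)+3\delta/4}{\ev}\geq 1-\exp(-\Omega(\delta^2 K))$, i.e., $M_K>\delta K/4$ holds with conditional probability at least $1-\exp(-\Omega(\delta^2 K))$ given $\ev$. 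These two bounds on the same event are incompatible unless $p$ is so small that the signaling error $\sqrt{\log(1/p)/n}$ in the per-step reduction already exceeds $\eps(\game,\delta)$. Chasing the parameters --- $\eps$ polynomial in $\delta$ with $\game$-dependent constants, signaling error proportional to $\sqrt{\log(1/p)/n}$ --- should yield $\log(1/p)\geq\Omega(\delta^4 n)$, which after absorbing the game-dependent constants into $\mu$ is the claim.

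The hard part will be the per-step reduction. In Holenstein's original two-player analysis only a single non-signaling cut has to be controlled, so one application of Lemma~\ref{lemma:bound} suffices; in the $m$-player setting all $2^m-2$ non-trivial bipartitions of players must be handled simultaneously, after a double conditioning on the winning event $\ev$ and on the growing history $H_k$. The complete-support hypothesis will be indispensable here: without $\pi_{\min}>0$ there is no way to convert the distributional closeness delivered by Lemma~\ref{lemma:bound} into a pointwise signaling bound on $|\q_{h_k,\ell}(a|x_I,x_J)-\q_{h_k,\ell}(a|x_I,x_J')|$, and the reduction would collapse. This is also where $\mu$ will acquire its dependence on the full description of $\game$ --- through $\pi_{\min}$ and the $\game$-dependent threshold in Proposition~\ref{prop:robust} --- rather than merely on $\val_\ns(\game)$.
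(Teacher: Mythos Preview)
Your overall architecture --- extract a single-round strategy from the conditioned repeated game, show it is almost non-signaling via Lemma~\ref{lemma:bound}, invoke Proposition~\ref{prop:robust}, then run a supermartingale/Azuma argument against Hoeffding's sampling bound --- matches the paper's. The gap is in how you invoke Lemma~\ref{lemma:bound}: it requires the $U\indl$ to be conditionally independent given $T$ \emph{before} conditioning on the event. You take $T=H_k$ (and later $T=(H_k,X_I\indl)$), but $H_k$ contains the full answers $A\ind{I_j}$. Non-signaling constrains marginals across \emph{player} bipartitions, not across \emph{repetition} indices; the tuple $A\ind{I_j}$ may depend on the entire input $\allx$ through $\qn$, so conditioning on it can correlate the fresh $X\indl$, and the hypothesis of Lemma~\ref{lemma:bound} fails for both of your applications. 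If instead you fold $\{H_k=h_k\}$ into the conditioning event (so that $T$ can be chosen correctly), the bound carries $\log\bigl(1/\prob{\ev\cap\{H_k=h_k\}}\bigr)$ rather than $\log(1/p)$; with your $K=\lfloor n/4\rfloor$ and a filtration recording full inputs and outputs, a typical history has probability at most $(|\X|\,|\A|)^{-\Omega(n)}$, and the resulting per-step signaling error is $\Omega(1)$, not $O\bigl(\sqrt{\log(1/p)/n}\bigr)$ as you claim.

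The paper's fix is to choose $T$ \emph{per bipartition}: for each $(I,J)$ it sets $T=(\allX_I,\allA_I)$, the full $I$-side transcript across \emph{all} $n$ rounds, and $U\indl=X_J\indl$. Non-signaling of $\qn$ then gives $P_{\allX_J|\allX_I\allA_I}=P_{\allX_J|\allX_I}=\prod_\ell P_{X_J\indl|\allX_I}$, so Lemma~\ref{lemma:bound} applies with an \emph{arbitrary} conditioning event; this is packaged as a Main Lemma bounding $\Prob{W_V=1}{\ev}$ by $\val_\ns(\game)+\const'(\game)\sqrt{\log(1/\prob{\ev})/(n-k)}$ for a uniformly random fresh $V$. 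The supermartingale is then built on the filtration of sampled indices and \emph{win/loss bits} $D_j=W_{V_j}$ only (so a typical $k$-history has probability at least $2^{-2K}$), with $K=\Theta(\eps^2 n)$ for $\eps=\delta/3$ rather than $n/4$, and \emph{without} conditioning on the global winning event --- the link to $\prob{\av{W}\ge t/n}$ is made only at the very end via the inequality $\prob{\av{W}>\val_\ns(\game)+\delta}\leq\prob{\av{D}\ge\val_\ns(\game)+2\eps}\big/\Prob{\av{D}\ge\val_\ns(\game)+2\eps}{\av{W}>\val_\ns(\game)+\delta}$.
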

As an immediate consequence, we get the following parallel-repetition theorem. 

\begin{theorem}[Parallel-Repetition Theorem]\label{thm:PR}
Let $\game$ be an arbitrary $m$-player game with complete
support and non-signaling value $\val_\ns(\game) < 1$. Then there exists $\nu < 1$, depending on $\game$, such that  $\val_\ns(\game^n) < 8 \nu^n$ for any $n \in \N$. 
\end{theorem}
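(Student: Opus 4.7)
The plan is that the Parallel-Repetition Theorem is an almost immediate consequence of the Concentration Theorem, obtained by specializing a single parameter.

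First I would observe that the ordinary $n$-fold parallel repetition satisfies $\game^n = \game^{n/n}$, so winning $\game^n$ with a strategy $\qn$ means $\sum_{\ell=1}^n \V(x\indl,a\indl) \geq n$, i.e.\ every subgame is won. To invoke Theorem~\ref{thm:CT} with $t = n$, the forced choice is
$$
\delta \assign 1 - \val_\ns(\game),
$$
which is strictly positive precisely because the hypothesis of Theorem~\ref{thm:PR} assumes $\val_\ns(\game) < 1$.

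Next I would apply Theorem~\ref{thm:CT} with this $\delta$ and with $t = (\val_\ns(\game) + \delta) n = n$. Letting $\mu = \mu(\game) > 0$ be the constant furnished by that theorem, I obtain
$$
\val_\ns(\game^n) \;=\; \val_\ns(\game^{n/n}) \;\leq\; 8 \exp(-\delta^4 \mu n) \;=\; 8 \bigl(\exp(-\delta^4 \mu)\bigr)^n.
$$
Setting $\nu_0 \assign \exp(-\delta^4 \mu)$ gives $\nu_0 \in (0,1)$ and $\val_\ns(\game^n) \leq 8 \nu_0^n$. To upgrade to the strict inequality $< 8 \nu^n$ stated in the theorem, I would pick any $\nu$ with $\nu_0 < \nu < 1$, e.g.\ $\nu \assign (1 + \nu_0)/2$, so that $\nu_0^n < \nu^n$ for every $n \geq 1$, yielding $\val_\ns(\game^n) < 8 \nu^n$.

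There is no genuine obstacle in this derivation: all of the technical effort is carried by Theorem~\ref{thm:CT}, and the passage to Theorem~\ref{thm:PR} is a one-line specialization. The only point worth flagging is that the resulting $\nu$ depends on $\game$ through both $\mu(\game)$ and $\val_\ns(\game)$, and not on $\val_\ns(\game)$ alone, exactly as the introduction already emphasizes when contrasting the multi-player case with the known two-player bounds.
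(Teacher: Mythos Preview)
Your derivation is correct and matches the paper's own treatment: the paper states Theorem~\ref{thm:PR} explicitly as ``an immediate consequence'' of Theorem~\ref{thm:CT} without giving a separate proof, and your specialization $\delta = 1 - \val_\ns(\game)$, $t = n$ is exactly the intended one-line reduction. Your extra care in passing from $\leq 8\nu_0^n$ to the strict $< 8\nu^n$ is a nice touch that the paper leaves implicit.
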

We point out that the constants $\mu$ (in Theorem~\ref{thm:CT}) and $\nu$ (in Theorem~\ref{thm:PR})  not only depend on the non-signaling value $\val_\ns(\game)$ of $\game$, but on the game $\game$ itself. The restriction to games with complete support stems from the fact that $\mu$ becomes $0$ when the smallest probability in the distribution $\pi$ goes to~$0$, rendering the bound useless.

%=================================
\subsection{The Proof}
%=================================

A central idea of our proof is the {\em robustness} of the
non-signaling value of a game.
We will use the following result from~\cite[Section~10.4]{Schrijver98} about the
sensitivity analysis of linear programs. 
\begin{lemma} \label{lem:sensitivity}
Let $A$ be an $m \times n$-matrix, and let $A$ be such that for each
nonsingular submatrix $B$ of $A$, all entries of $B^{-1}$ are at most
$\Delta$ in absolute value. Let $c$ be a row $n$-vector, and let $b'$ and $b''$ be column $m$-vectors such that
both $\max_x \Set{cx}{Ax \leq b'}$ and $\max_x \Set{cx}{Ax \leq b''}$
are finite. Then \footnote{For $x=(x_1,\ldots,x_n)
  \in \R^n$, the norms are defined as $\|x\|_1 = \sum_i |x_i|$ and
  $\|x\|_\infty = \max_i |x_i|$.}
\begin{align*}
\left| \, \max_{x \in \R^n} \Set{cx}{Ax\leq b''} - \max_{x \in \R^n} \Set{cx}{Ax \leq b'} \, \right|
\leq n \Delta \|c\|_1 \cdot \|b'' - b'\|_\infty \,  .
\end{align*}
\end{lemma}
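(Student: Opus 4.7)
The plan is to use LP duality to turn the sensitivity question into a bound on the $\ell_1$-norm of an optimal dual solution, and then use the basic-feasible-solution structure to control that norm via $\Delta$ and $\|c\|_1$.

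First I would write down the dual of $\max_x\{cx : Ax \leq b\}$, namely $\min_y\{b^T y : A^T y = c^T,\ y \geq 0\}$. The crucial observation is that the dual feasibility constraints do not involve $b$, only the dual objective does. So if $y'$ is an optimal dual solution for right-hand side $b'$, then $y'$ is still dual-feasible (possibly suboptimal) for $b''$. Writing $f(b) := \max_x\{cx : Ax \leq b\}$, weak duality gives $f(b'') \leq (b'')^T y'$, while strong duality gives $f(b') = (b')^T y'$, hence
\begin{align*}
f(b'') - f(b') \;\leq\; (b'' - b')^T y' \;\leq\; \|y'\|_1 \cdot \|b'' - b'\|_\infty.
\end{align*}
Swapping the roles of $b'$ and $b''$ using an optimal dual $y''$ for $b''$ gives the matching lower bound, so it suffices to show $\|y'\|_1, \|y''\|_1 \leq n\Delta \|c\|_1$.

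Next I would select $y'$ to be a \emph{basic} optimal dual solution; this exists because both LPs having finite optimum implies, by strong duality, that the dual is feasible and bounded. A basic feasible solution of the dual is supported on an index set $B \subseteq \{1,\ldots,m\}$ with $|B| = n$ such that $A_B$, the $n \times n$ submatrix of $A$ formed by the rows indexed by $B$, is nonsingular, and on $B$ the solution equals $y'_B = (A_B^T)^{-1} c^T$. By the hypothesis of the lemma, every entry of $(A_B^T)^{-1}$ has absolute value at most $\Delta$, so each component of $y'_B$ is a dot product of a length-$n$ row of $(A_B^T)^{-1}$ against $c^T$ and is therefore at most $\Delta \|c\|_1$ in absolute value. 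Summing over the at most $n$ nonzero coordinates gives $\|y'\|_1 \leq n \Delta \|c\|_1$, and the same argument applies to $y''$. Combining with the duality inequality above yields the claimed bound.

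I expect the main obstacle to be purely bookkeeping: matching the various primal/dual conventions (row vs.\ column vectors, the transposition in $A_B^T$ vs.\ $A_B$) and cleanly justifying existence of a basic optimal dual solution from the hypothesis that both primal optima are finite. Conceptually, no idea beyond standard LP duality plus the BFS characterization is needed; the statement is really just a quantitative packaging of the fact that the optimal value of an LP is a piecewise-linear concave function of $b$ whose slopes are the entries of optimal basic dual solutions.
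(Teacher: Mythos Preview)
The paper does not prove this lemma at all; it is quoted verbatim as a known result from Schrijver's textbook (Section~10.4), so there is no ``paper's own proof'' to compare against. Your argument via LP duality---bounding $|f(b'')-f(b')|$ by $\|y^*\|_1\|b''-b'\|_\infty$ for a basic optimal dual solution $y^*$, and then bounding $\|y^*\|_1$ by $n\Delta\|c\|_1$ using that $y^*_B=(A_B^T)^{-1}c^T$---is correct and is in fact precisely the standard proof that appears in Schrijver's book, so you have reconstructed the intended argument. The only place to be slightly careful is the tacit assumption that a basic dual solution sits on an $n\times n$ nonsingular $A_B$: when $\mathrm{rank}(A)<n$, finiteness of the primal forces $c$ into the row space of $A$, and one works with an $r\times r$ nonsingular submatrix instead; the bound then holds a fortiori since the support has size $r\le n$.
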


\begin{prop}[Robustness of $\val_\ns(\game)$]\label{prop:robust}
  Let $\game$ be an $m$-player game with non-signaling value
  $\val_\ns(\game)$. Then, there exists a constant $\const(\game)$
  % \chris{which turns out to depend only on the size of the game ($m,|\A|,|\X|$), but
  %   not the game itself ($\pi(\cdot),\V(\cdot,\cdot)$)}
  such that for any $\eps \geq 0$ and for any strategy $\q$ for
  $\game$ that is $\eps$-almost non-signaling, the value of $\game$
  with respect to $\q$ is bounded by $\val[\q](\game) \leq
  \val_\ns(\game) + \const(\game) \cdot \eps$.
\end{prop}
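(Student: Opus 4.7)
The proof plan is to recast both the non-signaling value and the supremum over $\eps$-almost non-signaling strategies as linear programs sharing the same constraint matrix, and then to apply the sensitivity lemma (Lemma~\ref{lem:sensitivity}).

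Concretely, I would view a candidate strategy $\q$ as a vector $(\q(a|x))_{x\in\X,a\in\A} \in \R^N$ with $N = |\X|\cdot|\A|$. The objective from Definition~\ref{def:game} is the linear form $c\cdot \q$ with $c_{x,a} = \pi(x)\V(x,a)$, so $\|c\|_1 \leq |\A|$. The constraints defining non-signaling strategies are (i) $\q(a|x)\geq 0$, (ii) $\sum_a \q(a|x) = 1$ for every $x$, and (iii) the no-signaling equalities from Definition~\ref{def:NS}. Rewriting each equality as two opposite $\leq$-inequalities gives a system $A\q \leq b'$ with an integer matrix $A$ whose entries lie in $\{-1,0,1\}$, and whose optimum equals $\val_\ns(\game)$.

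The definition of $\eps$-almost non-signaling (Definition~\ref{def:almostNS}) uses the very same matrix $A$ and objective $c$; only the right-hand side changes from $b'$ to a vector $b''$ that differs from $b'$ only in the no-signaling block, where each $0$ is replaced by $\eps$. Hence $\|b''-b'\|_\infty = \eps$. Any $\eps$-almost non-signaling strategy $\q$ is feasible for the relaxed LP, so $\val[\q](\game) = c\cdot\q \leq \max_{\q\in\R^N}\{c\cdot\q : A\q \leq b''\}$. Both LPs are bounded since $\q(a|x)\in[0,1]$ forces $c\cdot\q\in[0,|\A|]$.

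Applying Lemma~\ref{lem:sensitivity} then yields
\begin{align*}
\max_{\q\in\R^N}\{c\cdot\q : A\q \leq b''\} - \val_\ns(\game) \leq N\,\Delta\,\|c\|_1 \cdot \eps,
\end{align*}
where $\Delta$ bounds the absolute entries of $B^{-1}$ over nonsingular submatrices $B$ of $A$. Because $A$, $c$, and $N$ are determined entirely by $\X$, $\A$, $\pi$, and $\V$, the constant $\const(\game) \assign N\,\Delta\,\|c\|_1$ depends only on $\game$, which finishes the proof.

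The only step needing care is verifying that $\Delta$ is a well-defined finite constant; this is immediate since $A$ has finitely many submatrices and any nonsingular integer submatrix $B$ has $B^{-1}$ bounded entrywise by Cramer's rule. Thus the remaining work is the bookkeeping of assembling the LP cleanly; the sensitivity lemma does the heavy lifting.
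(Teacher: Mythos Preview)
Your proposal is correct and is essentially identical to the paper's proof: both cast $\val_\ns(\game)$ as an LP in standard form (rewriting equalities as paired inequalities), observe that $\eps$-almost non-signaling only perturbs the right-hand side of the no-signaling block, and invoke Lemma~\ref{lem:sensitivity} with $n=|\X|\cdot|\A|$ and $\|c\|_1\leq|\A|$. The only cosmetic difference is that the paper records $\|b''-b'\|_\infty = 2\eps$ and hence sets $\const(\game)=2|\X||\A|^2\Delta$, whereas your (slightly tighter) reading gives $\eps$; both are fine for the claim.
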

\begin{proof}
The non-signaling value $\val_\ns(\game)$ is the optimal value of the
following linear program:
\begin{align}
&\mbox{maximize}\quad\sum_{x \in \X \atop a\in \A} \pi(x) \, \V(x,a)
    \, q(a|x) \nonumber \\
&\mbox{subject to}\\
&\quad q(a|x) \geq 0 \quad\text{for all $a \in \A$, $x
  \in \X$,} \label{cons:positive}\\
&\quad \sum_{a \in \A} q(a|x) = 1 \quad\text{for all $x \in \X$,} \label{cons:sum}\\
\begin{split} \label{cons:nonsignaling}
&\quad \!\!\! \sum_{a_J \in \A_J} \!\! \q(a_I,a_J|x_I,x_J) -
\q(a_I,a_J|x_I,x'_J) = 0 \quad \text{for all $I \subset \set{1,\ldots,m}$,
  $J=\set{1,\ldots,m} \setminus I$}\\[-3mm]
&\qquad\qquad\qquad\qquad\qquad\qquad\qquad\qquad\qquad\text{ and for all $a_I \in \A_I$, $x_I \in \X_I$ and $x_J,x'_J \in \X_J$}
\, .
\end{split}
\end{align}
Lemma~\ref{lem:sensitivity} gives a bound on how much the optimal
value of this linear program can vary if we optimize over
$\eps$-almost non-signaling strategies instead of a fully
non-signaling strategies. Formally, we can express the linear program
above in the ``standard form'' $\max\Set{cx}{Ax \leq b'}$ by expanding
the equality constraints \eqref{cons:sum} and
\eqref{cons:nonsignaling} as $\leq$ and $\geq$ inequality
constraints. According to Definition~\ref{def:almostNS},
$\eps$-almost non-signaling strategies fulfill the
constraints~\eqref{cons:nonsignaling} only up to an error of
at most $2\eps$. Hence, relaxing the constraints from non-signaling to
$\eps$-almost non-signaling amounts to change the $b'$-coordinates
corresponding to the non-signaling constraints
\eqref{cons:nonsignaling} from $0$ to $2\eps$. Hence, the parameters of
Lemma~\ref{lem:sensitivity} are $\| b'' - b' \|_\infty = 2\eps$,
$n=|\X| \cdot |\A|$, $\|c\|_1 = \sum_{x \in \X \atop a\in \A} |\pi(x)
\, \V(x,a) | \leq |\A|$ and $\Delta$ is a finite constant that depends
on the number of players $m$ and the number of answers $|\A|$ and
questions $|\X|$.\footnote{In our case, the relevant constraint matrix $A$ has $n = |\X| \cdot |\A|$ columns and at most $2\left( (|\A|\cdot|\X| + |\X|^2)^m + |\X|\right)$ rows. Let $\Delta := \max \left\{ \left|(B^{-1})_{ij}\right| \mid \mbox{B a nonsingular submatrix of $A$} \right\}$, which depends only $m, |\A|, |\X|$.
% The entries of $A$ are all in $\set{-1,0,1}$. Alon and Vu~\cite{AV97} have shown the existence of $\set{0,1}$-matrices of size $n\times n$ with the biggest entry of the inverse matrix as big as $n^{n(\frac12+o(1))}$, and this is sharp. Exploiting the explicit structure of the non-signaling constraints, one could possibly get much better bounds.
}
 Finally, we note that we can apply the lemma,
because the objective function is at most one (and thus finite)
irrespective of which strategies we are considering. Setting
$\const(\game) \assign 2|\X| |\A|^2 \Delta$ yields the claim. 
\end{proof}
% \chris{we
%   can say more about the constant $\Delta$: the entries of matrix $A$
%   are all in $\set{-1,0,1}$, but (Tobias M\"uller told us at some
%   point that) Alon and Vu (http://dx.doi.org/10.1006/jcta.1997.2780)
%   showed: There are {0,1}-matrices with the biggest entry of the
%   inverse $n^{n(1/2+o(1))}$, and this is sharp. Here, $n$ is the size
%   of these matrices. In our case, the number of columns is $n= |\X|
%   \cdot |\A|$, and the number of rows is much larger. Maybe, if we
%   looked closer at the $A$-matrices we get from the non-signaling
%   constraints, we could get (much) better bounds.}

\begin{lemma}[Main Lemma]\label{lemma:main}
Let $\game$ be a game with complete support. 
Consider an $n$-fold repetition $\game^n$ of $\game$ with an arbitrary non-signaling strategy $\qn$ for $\game^n$. Let $\ev$ be an arbitrary event (in the underlying probability space). 
Then for any subset $S = \set{v_1,\ldots,v_k} \subset \set{1,\ldots,n}$, the probability $\Prob{W_{V}\!=\!1}{\ev}$ for a randomly chosen $V$ in $\set{1,\ldots,n} \setminus S$ is bounded by
$$
\Prob[\big]{W_{V}\!=\!1}{\ev}\leq \val_\ns(\game) + \const'(\game) \cdot \sqrt{\textstyle\frac{1}{n-k} \log\bigl(\frac{1}{\prob{\ev}}\bigr)}
$$
where $\const'(\game) = 3\cdot 2^m \const(\game)/\min_x\pi(x)$ is some constant that only depends on $\game$. 
\end{lemma}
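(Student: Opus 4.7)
The plan is to construct a (provably almost non-signaling) single-shot strategy $\q$ for $\game$ whose value approximates $\Prob{W_V\!=\!1}{\ev}$, and then invoke Proposition~\ref{prop:robust}. Throughout I assume without loss of generality that $S = \set{L+1,\ldots,n}$ with $L = n-k$, write $T \assign \allX_S$ for the type consisting of the pinned questions, and let $P$ denote the joint distribution of $(\allX,\allA)$ induced by the non-signaling strategy $\qn$.

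I first define $\q$ by having the players use shared randomness to sample $v$ uniformly from $\set{1,\ldots,L}$ and $t$ from $P_{T|\ev}$, and then play according to the ``ideal'' conditional strategy $\q_{v,t}(a|x) \assign \Prob{A\ind{v}\!=\!a}{X\ind{v}\!=\!x, T\!=\!t, \ev}$. Whether this is operationally implementable by non-signaling players is irrelevant---it is analyzed formally as a conditional probability distribution that must be shown to satisfy the almost-non-signaling axioms of Definition~\ref{def:almostNS}. Comparing $\val[\q](\game)$ (which weights inputs by $\pi$) to $\Prob{W_V\!=\!1}{\ev}$ (which weights them by $P_{X\ind{v}|T,\ev}$), the discrepancy is bounded by $2\,\E_{v,t|\ev}\|P_{X\ind{v}|T,\ev} - \pi\|$. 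Applying Lemma~\ref{lemma:bound} with $U\indl = X\indl$ (conditionally independent and $\pi$-distributed given $T$ under $P$) gives $\sum_v\|P_{TX\ind{v}|\ev} - P_{T|\ev}\cdot\pi\| \leq \sqrt{L\log(1/\prob{\ev})}$, so this discrepancy is at most $2\sqrt{L^{-1}\log(1/\prob{\ev})}$.

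The main technical step is showing that $\q$ is $\eps$-almost non-signaling for $\eps = O(2^m/\min_x\pi(x))\sqrt{L^{-1}\log(1/\prob{\ev})}$. For each $I \subset \set{1,\ldots,m}$ with complement $J$, I apply Lemma~\ref{lemma:bound} with conditioning variable $Y = (T,\allX_I,\allA_I)$ and $U\indl = X_J\indl$. The required conditional independence holds because (i) questions at different coordinates are independent under $P$, so conditional on $\allX_I$ the $X_J\indl$'s are independent, each distributed as $P_{X_J|X_I\indl}$; and (ii) the non-signaling of $\qn$ for the partition $I$ vs.\ $J$ implies $\allA_I$ is independent of $\allX_J$ given $\allX_I$, so further conditioning on $\allA_I$ leaves the conditional distribution of each $X_J\indl$ unchanged. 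The lemma then yields $\E_{V, Y|\ev}\|P_{X_J\ind{V}|Y,\ev} - P_{X_J|X_I\ind{V}}\| \leq \sqrt{L^{-1}\log(1/\prob{\ev})}$.

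I then convert this into the desired almost-non-signaling bound via a Bayes'-rule calculation: writing $\sum_{a_J}\q_{v,t}(a_I,a_J|x_I,x_J)$ as a ratio of two marginals of $P_{Y, X_J\ind{v}|\ev}$ and substituting the product approximation $P_{Y|\ev}\cdot P_{X_J|X_I\ind{v}}$, the factor $P_{X_J|X_I\ind{v}}(x_J)$ cancels between numerator and denominator, leaving the $x_J$-independent quantity $\Prob{A_I\ind{v}\!=\!a_I}{T\!=\!t, X_I\ind{v}\!=\!x_I, \ev}$. Turning the resulting average-over-$x_J$ estimate into a pointwise bound costs a factor of $1/\min_x\pi(x)$ (via $P_{X_J|X_I=x_I}(x_J) \geq \min_x\pi(x)$, which is precisely where complete support enters), and summing over the $2^m$ subsets $I$ supplies the other factor in $\eps$. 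Combining Proposition~\ref{prop:robust}, which gives $\val[\q](\game) \leq \val_\ns(\game) + \const(\game)\cdot\eps$, with the input-distribution error from the previous step and tracking constants yields the bound with $\const'(\game) = 3\cdot 2^m\const(\game)/\min_x\pi(x)$. The main obstacle is this Bayes'-rule manipulation: carefully converting an $L_1$-type average estimate from Lemma~\ref{lemma:bound} into a uniform pointwise bound on the non-signaling defect, where the complete-support assumption is essential to control the worst case.
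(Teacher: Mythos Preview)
Your approach is essentially the same as the paper's: apply Lemma~\ref{lemma:bound} with conditioning variable containing $(\allX_I,\allA_I)$ and $U\indl = X_J\indl$ (using non-signaling for the required conditional independence), correct the input distribution back to $\pi$ via a second application of Lemma~\ref{lemma:bound}, divide by $\min_x\pi(x)$ to pass from an average to a pointwise non-signaling defect, sum over the $2^m$ partitions $I$, and invoke Proposition~\ref{prop:robust}.

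Two minor remarks. First, your extra conditioning on $T = \allX_S$ is superfluous: the paper simply defines the single-shot strategy as $\tilde{\q}_\ell(a|x) = P_{A\indl|X\indl,\ev}(a|x)$ and takes $(\allX_I,\allA_I)$ alone as the conditioning variable in Lemma~\ref{lemma:bound}; the set $S$ enters only through restricting the sum over $\ell$ to $\set{1,\ldots,n}\setminus S$. Your version with $T$ is correct but adds a layer of bookkeeping that is never used. Second, what you describe as a ``Bayes'-rule calculation'' the paper carries out as a direct chain of triangle-inequality bounds on total-variation distances, arriving at $\sum_\ell\|P_{X_IX_J}\cdot P_{A_I\indl|X_I\indl X_J\indl,\ev} - P_{X_IX_J}\cdot P_{A_I\indl|X_I\indl,\ev}\|\leq 3(n-k)\eps$ and then dividing through by $\pio$; this is arguably cleaner than manipulating ratios, and it makes tracking the constant $3\cdot 2^m\const(\game)/\pio$ straightforward.
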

The following is an immediate consequence. 

\begin{corollary}\label{cor:main}
  Let $\game$ be a game with complete support.  Consider an execution
  of the $n$-fold repetition $\game^n$ with an arbitrary non-signaling
  strategy for $\game^n$. For any $\ell \in \set{1,\ldots,n}$, let
  $\ev_\ell$ be the event that the $\ell$-th
  repetition is accepted, i.e.~$W_\ell = 1$.  Then for any subset $S =
  \set{v_1,\ldots,v_k} \subset \set{1,\ldots,n}$, there exists $v_{k+1}
  \in \set{1,\ldots,n} \setminus S$ such that
$$
\Prob[\big]{\ev_{v_{k+1}}}{\ev_{v_1} \wedge \ldots\wedge \ev_{v_k}} \leq \val_\ns(\game) + \const'(\game) \cdot \sqrt{\textstyle\frac{1}{n-k} \log\bigl(\frac{1}{\prob{\ev_{v_1} \wedge \ldots\wedge \ev_{v_k}}}\bigr)}
$$
where $\const'(\game)$ is some constant that only depends on $\game$.
\end{corollary}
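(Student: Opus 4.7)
The plan is to derive the corollary from Lemma~\ref{lemma:main} by a one-line averaging argument. First I would apply the Main Lemma to the specific event $\ev := \ev_{v_1} \wedge \cdots \wedge \ev_{v_k}$. If $\prob{\ev} = 0$ then the statement is vacuous (the conditional probability is undefined and $\log(1/\prob{\ev})=+\infty$ makes the bound trivial), so I may assume $\prob{\ev} > 0$ and all conditional probabilities appearing below are well-defined.

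By Lemma~\ref{lemma:main}, when $V$ is drawn uniformly at random from $\set{1,\ldots,n} \setminus S$, the probability $\Prob{W_V\!=\!1}{\ev}$ is bounded by $\val_\ns(\game) + \const'(\game) \sqrt{\frac{1}{n-k}\log(1/\prob{\ev})}$. Using that $V$ is independent of the game randomness and uniform over a set of size $n-k$, this probability unfolds as
$$
\Prob[\big]{W_V\!=\!1}{\ev} \;=\; \frac{1}{n-k} \!\!\sum_{j \in \set{1,\ldots,n}\setminus S}\!\! \Prob[\big]{\ev_j}{\ev}\, ,
$$
since $W_j = 1$ is precisely the event $\ev_j$. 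Hence the arithmetic mean of the $n-k$ conditional probabilities $\Prob{\ev_j}{\ev}$ is bounded by the right-hand side of the Main Lemma. The minimum of a finite collection of reals is at most their average, so at least one index $v_{k+1} \in \set{1,\ldots,n} \setminus S$ achieves
$$
\Prob[\big]{\ev_{v_{k+1}}}{\ev} \;\leq\; \val_\ns(\game) + \const'(\game) \cdot \sqrt{\textstyle\frac{1}{n-k} \log\bigl(\frac{1}{\prob{\ev}}\bigr)}\, ,
$$
which is exactly the claim of the corollary.

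There is essentially no technical obstacle here: the real work is contained in Lemma~\ref{lemma:main}, and the only point to be careful about is the notational convention that $\Prob{W_V\!=\!1}{\ev}$ in the Main Lemma is a probability taken jointly over the uniform choice of $V \in \set{1,\ldots,n}\setminus S$ and the game randomness, so that it decomposes as the average displayed above.
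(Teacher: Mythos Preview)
Your argument is correct and is exactly the intended one: the paper does not spell out a proof but merely states that the corollary is ``an immediate consequence'' of Lemma~\ref{lemma:main}, and the averaging/min-vs-average step you give is precisely how that immediate consequence is obtained.
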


\begin{proof}[Proof (of Lemma~\ref{lemma:main})] 
Let $\pio > 0$ be such that $\pi(x) \geq \pio$ for all $x \in \X$;
by assumption on $\game$, such a $\pio$ exists. 
By re-ordering the (strategies of the) $n$ executions, we may assume without loss of generality that $S = \set{n-k+1,\ldots,n}$, and we now need to argue about the probability over a random $V$ in $\set{1,\ldots,n-k}$. To simplify notation, let us define
$$
\eps:= \sqrt{\textstyle\frac{1}{n-k} \log\bigl(\frac{1}{\prob{\ev}}\bigr)} \, .
$$
%Let $\ev$ be the event $\ev = \ev_{n-k+1}\wedge\ldots\wedge\ev_n$. 
Fix a subset $I \subseteq \set{1,\ldots,m}$ and let $J = \set{1,\ldots,m} \setminus I$ be the complement of $I$. 
Consider the distribution 
$$
P_{\allX_I \allX_J \allA_I} = P_{\allX_I \allA_I} \cdot P_{\allX_J|\allX_I \allA_I} 
= P_{\allX_I \allA_I} \cdot P_{\allX_J|\allX_I} 
= P_{\allX_I \allA_I} \cdot \prod_{\ell=1}^n P_{X_J\indl|\allX_I} 
= P_{\allX_I \allA_I} \cdot \prod_{\ell=1}^n P_{X_J\indl|\allX_I \allA_I} 
$$ 
where the second equality is due to non-signaling, the third due to the independence of every pair $(X_I\indl,X_J\indl)$, and the third again due to non-signaling. We can thus apply Lemma~\ref{lemma:bound} (with $T = (\allX_I,\allA_I)$ and $U\indl = X_J\indl$) and obtain 
\begin{align*}
(n-k) &\cdot \eps = \sqrt{\textstyle(n-k) \log\bigl(\frac{1}{\prob{\ev}}\bigr)} \, \geq \sum_{\ell=1}^{n-k} \bigl\| P_{\allX_I X_J\indl \allA_I|\ev} - P_{\allX_I \allA_I | \ev} \cdot P_{X_J\indl|\allX_I\allA_I} \bigr\|  \\
&\geq  \sum_{\ell=1}^{n-k} \bigl\| P_{X_I\indl X_J\indl A_I\indl|\ev} - P_{X_I\indl A_I\indl | \ev} \cdot P_{X_J\indl|X_I\indl A_I\indl} \bigr\| 
 = \sum_{\ell=1}^{n-k} \bigl\| P_{X_I\indl X_J\indl A_I\indl|\ev} - P_{X_I\indl A_I\indl | \ev} \cdot P_{X_J\indl|X_I\indl} \bigr\| \\
&= \sum_{\ell=1}^{n-k} \bigl\| P_{X_I\indl X_J\indl|\ev} \cdot P_{A_I\indl| X_I\indl X_J\indl \ev} - P_{X_I\indl| \ev} \cdot P_{A_I\indl| X_I\indl \ev} \cdot P_{X_J\indl|X_I\indl} \bigr\| \, .
\end{align*}
The first inequality holds by Lemma~\ref{lemma:bound}. The second
inequality follows from Lemma~\ref{lemma:marginals} which states that
the distance of the random variables $X_I\indl, X_J\indl, A_I\indl$ cannot be
larger than the distance of all random variables $\allX_I, X_J\indl, \allA_I$.
% holds because both $P_{\allX_I X_J\indl \allA_I|\ev}$ and $P_{\allX_I \allA_I | \ev} \cdot P_{X_J\indl|\allX_I\allA_I}$ are valid distributions for the random variables $\allX_I, X_J\indl, \allA_I$, and by considering the corresponding distributions of a subset of these random variables, namely $X_I\indl, X_J\indl, A_I\indl$, the distance cannot increase.
% \serge{Should we formalize this by a lemma somewhere? } 
The subsequent equality holds due to the non-signaling condition between subsets $I$ and $J$, and the last equality is a simple re-writing of some probabilities. 

By means of Lemma~\ref{lemma:bound} (setting $T$ to be a constant), we can also conclude that $\sum_\ell \|P_{X_I\indl X_J\indl|\ev} - P_{X_I\indl X_J\indl} \|$, and thus in particular $\sum_\ell \|P_{X_I\indl|\ev} - P_{X_I\indl} \|$, is upper bounded by $(n-k)\eps$. Therefore, noting that $P_{X_I\indl X_J\indl} = P_{X_I X_J}$, we can conclude that
$$
\sum_{\ell=1}^{n-k} \bigl\| P_{X_I X_J} \cdot P_{A_I\indl| X_I\indl X_J\indl \ev} - P_{X_I X_J} \cdot P_{A_I\indl| X_I\indl \ev} \bigr\| 
\leq 3 (n-k) \eps \, .
$$

By summing over all subsets $I \subseteq \set{1,\ldots,m}$ (and letting $J$ be its complement), changing the order of the summation, and defining 
$$
\eps_\ell \assign \sum_I \bigl\| P_{X_I X_J} \cdot P_{A_I\indl| X_I\indl X_J\indl \ev} - P_{X_I X_J} \cdot P_{A_I\indl| X_I\indl \ev} \bigr\|
$$ 
we get
$$
\sum_{\ell=1}^{n-k} \eps_\ell
\leq 3\cdot 2^m (n-k) \eps\, .
$$
Note that by definition of $\eps_\ell$, for any choice of $I$ and $J = \set{1,\ldots,m} \setminus I$, it holds that 
% \serge{We're very loose here...}
$$
\bigl\| P_{X_I X_J} \cdot P_{A_I\indl| X_I\indl X_J\indl \ev} - P_{X_I X_J} \cdot P_{A_I\indl| X_I\indl \ev} \bigr\| 
\leq \eps_\ell \, ,
$$
and hence, by the lower bound $\pio$ on $P_{X_I X_J}$, that 
$$
\bigl\| P_{A_I\indl| X_I\indl X_J\indl \ev}(\cdot|x_I,x_J) - P_{A_I\indl| X_I\indl \ev}(\cdot|x_I) \bigr\| 
\leq  \frac{\eps_\ell}{\pio} 
$$
for any $x_I$ and $x_J$.  For any $\ell \in \set{1,\ldots, n-k}$,
consider the strategy $\tilde{\q}_\ell$ for (one execution of)
$\game$, defined by $\tilde{\q}_\ell(a|x) = P_{A\indl| X\indl
  \ev}(a|x)$. By the above, $\tilde{\q}_\ell$ is
$(\eps_\ell/\pio)$-almost non-signaling.  
% \serge{Again, we're very loose...}  
Furthermore, by the definition of $\tilde{\q}_\ell$, the
probability $\Prob{\ev_\ell}{\ev}$ that the
$\ell$-th repetition of the $n$-fold repetition of $\game$ is
accepted  equals the probability $\val[\tilde{\q}_\ell](\game)$ that a
{\em single} execution of $\game$ is accepted when strategy
$\tilde{\q}_\ell$ is played. Since $\tilde{\q}_\ell$ is
$(\eps_\ell/\pio)$-almost non-signaling, it follows from
Proposition~\ref{prop:robust} that this probability is at most
$\val_{\ns}(\game)+\const(\game)\cdot\eps_\ell/\pio$.  The claimed
bound on $\Prob{\ev_{V}}{\ev}$ for a randomly chosen $V$ in
$\set{1,\ldots,n-k}$ now follows from the bound on $\sum_\ell
\eps_\ell$, where $\const'(\game)$ is given by $3\cdot 2^m
\const(\game)/\pio$.
\end{proof}

We are now ready to prove our main concentration bound.
\begin{proof}[Proof (of Theorem~\ref{thm:CT})]
Let $K$ be some integer parameter, to be defined later. 
Let $V_1,\ldots,V_K$ be a random subset of distinct integers from $\set{1,\ldots,n}$, and let $D_k$ be the random variable $D_k = W_{V_k} =  \V(X\ind{V_k},A\ind{V_k})$ for any $k \in \set{1,\ldots,K}$. Understanding $V_1,\ldots,V_K$ as a ``sample subset" of the $n$ parallel repetitions of $\game$, $D_k$ indicates whether the $k$-th game in the sample is won.  
A pair $(d_1,\ldots,d_k) \in \set{0,1}^k$ and $(v_1,\ldots,v_k) \in \set{1,\ldots,n}$ of $k$-tuples is called {\em typical} if $P_{D_1\cdots D_k|V_1\cdots V_k}(d_1,\ldots, d_k|v_1,\ldots, v_k) \geq 2^{-2K}$. Let $\typ_k$ be the event that $(D_1\cdots D_k)$ and $(V_1\cdots V_k)$ form a typical pair.
%, i.e. that $P_{D_1\cdots D_k|V_1\cdots V_k}(D_1,\ldots, D_k|V_1,\ldots, V_k) \geq 2^{-2K}$. 
Note that the corresponding complementary events satisfy $\bar{\typ}_k \Rightarrow \bar{\typ}_{k+1}$ as well as 
$$
\prob{\bar{\typ}_k} = \!\!\!\sum_{\text{atypical pairs} \atop (d_1...d_k),(v_1... v_k)}\!\!\! P_{V_1\cdots V_k}(v_1,\ldots, v_k) \, P_{D_1\cdots D_k|V_1\cdots V_k}(d_1\cdots d_k|v_1\cdots v_k) < 2^{-K} \, .
$$
Let $\gamma \assign 1 -  \val_\ns(\game) - \eps$ where $\eps \assign \delta/3$. Note that we obviously may assume that $\delta \leq 1-\val_\ns(\game)$ so that $\gamma > 0$.
We now define a sequence of random variables $M_0,\ldots,M_K$ as follows. Random variable $M_0$ takes the value $0$ with certainty, and $M_{k+1}$  is inductively defined as
$$
M_{k+1} \assign \left\{\begin{array}{ll}
M_{k} + \gamma & \text{if $D_{k+1} \!=\! 1 \mbox{ and } \typ_{k}$ }\\
M_{k} - (1-\gamma) & \text{otherwise} \, .
\end{array}\right.
$$
We want to show that $M_0,\ldots,M_K$ forms a supermartingale.  We fix
$k \in \set{0,\ldots,K-1}$ and we fix values $(v_1,\ldots,v_{k})$ for
the random variables $V_1,\ldots,V_{k}$. Up to the end of this
paragraph, all probabilities etc. are to be understood conditioned on
these values. We define $\ev$ to be the event that $D_1,\ldots,D_{k}$
take on some arbitrary but fixed values $(d_1,\ldots,d_{k})$.  If the
pair $(d_1,\ldots,d_{k})$ and $(v_1,\ldots,v_{k})$ is atypical, then
conditioned on $\ev$ we have $M_{k+1} = M_{k} + \gamma -1 < M_{k}$ and
thus $\E[M_{k+1}|M_0 \cdots M_{k}] < \E[M_{k}|M_0 \cdots M_{k}] =
M_{k}$.  In the other case, if the pair $(d_1,\ldots,d_{k})$ and
$(v_1,\ldots,v_{k})$ is typical then $\prob{\ev} \geq
2^{-2K}$. Furthermore, Lemma~\ref{lemma:main} implies that
$P_{D_{k+1}|\ev}(1) = \Prob{\ev_{V_{k+1}}}{\ev} \leq \val_\ns(\game) +
\const'(\game) \sqrt{\log(1/\prob{\ev})/(n-k)} \leq \val_\ns(\game) +
\const'(\game) \sqrt{2K/(n-K)}$.  We want this last term to be upper bounded by
$\val_\ns(\game) + \eps = 1-\gamma$, which we achieve by choosing $K$
as $K \assign \lfloor \alpha n \rfloor$ where $\alpha \assign
\min\set{\eps^2/(3\const'(\game)^2), 1/3}$, as can easily be verified. 
It follows that $\E[M_{k+1}|M_0
\cdots M_{k}]  \leq (1-\gamma)(M_{k}+\gamma)+\gamma(M_{k} - (1-\gamma)) = M_{k}$ (when conditioning on~$\ev$). Since the argument that the $M_0, \ldots, M_K$ form a supermartingale holds independent of the choice of $(d_1,\ldots,d_{k})$ and of the
choice of $(v_1,\ldots,v_{k})$, $M_0,\ldots,M_K$ indeed forms a
supermartingale in the original probability space (without
conditioning on the values for $V_1,\ldots,V_{k}$).  Therefore,
\begin{align*}
\prob[\bigg]{\sum_{k=1}^K &D_k \geq (\val_\ns(\game)\!+\!2\eps) K} 
\,\leq\, \prob[\big]{\bar{\typ}_K} + \prob[\big]{M_K \geq (\val_\ns(\game)\!+\!2\eps)K\gamma - (1\!-\!\val_\ns(\game)\!-\!2\eps)K(1\!-\!\gamma)} \\
&\leq\, 2^{-K} + \prob[\big]{M_K \geq  (\gamma-1+\val_\ns(\game)+2\eps)K} 
\,=\, 2^{-K} + \prob{M_K \!\geq\! \eps K}  \\[1ex]
&\leq\, 2^{-K} + \exp(-\eps^2K/2)
\,<\, 2\exp(-\eps^2K/2)
\end{align*}
The first inequality holds by definition of $M_K$, and the second by a simple manipulation of the terms. The equality holds by definition of $\gamma$, and the subsequent inequality by the Azuma-Hoeffding Inequality. Finally, the last inequality holds since $\eps < 1$ and $\exp(\frac12) < 2$. 

On the other hand, setting $\av{D} \assign \frac{1}{K}\sum_{k=1}^K D_k$, we can also write
\begin{align*}
\prob[\big]{\av{D} \geq \val_\ns(\game)+2\eps}  
\geq\prob[\big]{\av{W}  > \val_\ns(\game) +\delta} \cdot \Prob[\big]{\av{D} \geq \val_\ns(\game)+2\eps}{\av{W}  > \val_\ns(\game) +\delta}
\end{align*}
where by the Hoeffding Inequality (and using that $\eps = \delta/3$)
$$
\Prob[\big]{\av{D} \geq \val_\ns(\game)\!+\!2\eps}{\bar{W}  > \val_\ns(\game) +\delta} \geq 1 - \exp(-2\eps^2 K) \, .
$$
Therefore,
$$
\prob[\big]{\av{W}  > \val_\ns(\game) +\delta} 
\leq  \frac{2\exp(-\eps^2K/2)}{1-\exp(-2\eps^2K)} \, .
$$
In case that $\exp(-2\eps^2K) < \frac14$, we obtain the bound
\begin{equation} \label{eq:lastbound}
\prob[\big]{\av{W}  > \val_\ns(\game) +\delta} 
\leq  \frac{8}{3} \exp(-\eps^2K/2) \, .
\end{equation}
Note that in the other case, if $\exp(-2\eps^2K) \geq \frac14$, then
$2\exp(-\eps^2K/2) \geq 1$ and the bound~\eqref{eq:lastbound} holds trivially.

Setting $\mu \assign 1/(2\cdot 3^5 \cdot \const'(\game)^2)$, and recalling that $\eps = \delta/3$ and $K \assign \lfloor \alpha n \rfloor$ with $\alpha$ chosen as $\alpha
\assign \min\set{\eps^2/(3\const'(\game)^2), 1/3}$, leads to the
claim. 
\end{proof}

\section{Conclusion and Open Questions}
This article initiates the investigation of the behavior of multi-player nonlocal games under parallel repetition. For the case of the non-signaling value, we provide a concentration bound for games with complete support. Our results might serve as a stepping stone for the investigation of the quantum and classical values. Other interesting questions include improving the rate of repetition (e.g.~by making it independent of the minimal probability that any question is asked) or finding cryptographic applications, for instance in position-based cryptography.

\section*{Acknowledgments}
We would like to thank Tobias M\"uller for insightful discussions about the
sensitivity of linear programs.
HB is supported by a 7th framework EU SIQS grant. CS is supported by an NWO VENI grant.

\bibliographystyle{plain}% the recommended bibstyle

\bibliography{qip,crypto,procs}

\end{document}